\documentclass[preprint]{elsarticle}
\usepackage{multicol}
\usepackage{tikz}
\usepackage{empheq}
\newcounter{casenum}

\usepackage[T1]{fontenc}
\usetikzlibrary{patterns,positioning}
\usetikzlibrary{shapes.geometric,calc,decorations.text}
\usepackage{caption}
\usepackage{subcaption}
\usepackage{todonotes}
\usepackage{bbold} 
\usepackage{float}
\usepackage{amsmath}
\usepackage{amsfonts,amssymb,amsmath}
\usepackage{xcolor}
\usepackage{url}
\usepackage{subcaption}
\usepackage{algorithm}
\usepackage[justification=centering]{caption}
\usetikzlibrary{patterns,positioning}
\usepackage{indentfirst}
\usepackage{caption}
\usepackage{amsthm}
\usepackage{amsmath}

\usetikzlibrary{shapes.misc}

\usetikzlibrary{shapes.geometric}
\usetikzlibrary{patterns,positioning}
\usepackage{amsmath, amssymb}
\usepackage{tikz}
\usetikzlibrary{shapes.geometric,calc,decorations.text}
\usetikzlibrary{patterns}

\usepackage{mathrsfs}
\usepackage{amssymb}
\usepackage{dsfont}
\usetikzlibrary{patterns}
\usetikzlibrary{shapes.geometric}
\def\mytransformation{
	\pgfmathsetmacro{\myX}{0.9\pgf@x}
	\pgfmathsetmacro{\myY}{0.0045*(\pgf@x)*(\pgf@x)+(\pgf@y)}
	\setlength{\pgf@x}{\myX pt}
	\setlength{\pgf@y}{\myY pt}
}

\makeatother
\usetikzlibrary{backgrounds}
\usetikzlibrary{shapes.geometric}
\usetikzlibrary{calc}
\usepackage{enumerate}
\usepackage{mathrsfs}
\usepackage{amssymb}
\usepackage{graphicx}%
\usepackage{multirow}%
\usepackage{amsmath,amssymb,amsfonts}%
\usepackage{mathrsfs}%
\usepackage{xcolor}%
\usepackage{textcomp}%
\usepackage{manyfoot}%
\usepackage{algpseudocode}%
\usepackage{setspace}
\usepackage{pdflscape}
\usepackage{graphicx}
\usepackage{pst-node}
\usepackage{booktabs}
\usepackage{tikz}
\usepackage{todonotes}
\usepackage[inline]{enumitem}

\renewenvironment{proof}{{\it Proof.}}{\qed \medskip}

\newtheorem{theorem}{Theorem}
\newtheorem{obs}{Observation}
\newtheorem{corollary}[theorem]{Corollary}
\newtheorem{lemma}[theorem]{Lemma}

\usepackage{mathtools}

\providecommand{\keywords}[1]{\textbf{\textit{Keywords—}} #1}

\begin{document}
	
	\title{A column generation algorithm for finding co-3-plexes in chordal graphs}
	
	\author[1,2]{Alexandre Dupont-Bouillard}
	
	\affiliation[1]{\organization={Université de Rennes -- Institut de recherche en informatique et systèmes aléatoires CNRS UMR 6074},  \adressline={263 Av. Général Leclerc}, \city={Rennes}, \postcode={35000}, \country={France}}
	
	\affiliation[2]{\organization={Ecole polytechnique, Institut Polytechnique de Paris LIX -- CNRS UMR 7161}, \adressline={263 Avenue général Leclerc},\city={Palaiseau}, \postcode={91120}, \country={France} }

	\journal{Discrete optimization}
	
	\begin{abstract}
		In this study, we tackle the problem of finding a maximum \emph{co-3-plex}, which is a subset of vertices of an input graph, inducing a subgraph of maximum degree 2. We focus on the class of chordal graphs.
		By observing that the graph induced by a co-3-plex in a chordal graph is a set of isolated triangles and induced paths, we reduce the problem of finding a maximum weight co-3-plex in a graph $G$ to that of finding a maximum stable set in an auxiliary graph $\mathcal{A}(G)$ of exponential size. 
		This reduction allows us to derive an exponential variable-sized linear programming formulation for the maximum weighted co-3-plex problem.
		We show that the pricing subproblem of this formulation reduces to solving a maximum vertex and edge weight induced path.
		Such a problem is solvable in polynomial time; therefore, this exhibits a polynomial time column generation algorithm solving the maximum co-3-plex problem on chordal graphs. 
		Moreover, this machinery exhibits a new application for the maximum vertex and edge weighted induced path problems.
		
		\keywords{co-$k$-plex extended formulations co-3-plex polytope  chordal graphs, column generation. }
		
	\end{abstract}
	
	\maketitle

	\section*{Definitions}
	All graphs in this article are simple and connected unless specified. 
	Given a graph $G=(V,E)$, we denote its \textit{complement} by $\overline{G}=(V,\overline E)$, where $\overline{E} = \{ e \in \binom{V}{2} : e \notin E \}$. 
	We respectively denote by $V(G)$ and $E(G)$ the vertex set and the edge set of $G$. Two vertices $u$ and $v$ are \emph{adjacent} if $uv \in E(G)$. 
	Given a subset of vertices $W \subseteq V$, let $E(W)$ denote the set of edges of $G$ with both endpoints in $W$ and $\delta (W)$ denote the set of all edges with exactly one endpoint in $W$.
	When $W$ is a singleton $\{w\}$, we will write $\delta (w)$. We say that the edges in $\delta (w)$ are \textit{incident} to $w$, and two edges sharing an extremity are said to be {\em adjacent}.
	For $F \subseteq E$, let $V(F)$ denote the set of vertices incident to any edge in $F$.
	Given $W\subseteq V$, the graph $G[W]=(W,E(W))$ is the \textit{subgraph of $G$ induced by $W$}. When $H$ is an induced subgraph of $G$, we say that $G$ {\em contains} $H$. 
	Given a vertex $u\in V(G)$, we denote  by $N_G(u) = \{w \in V(G) : uw \in E(G) \}$ its \textit{neighborhood}, and by $N_G[u] = N_G(u) \cup \{ u\}$ its \textit{closed neighborhood}, when clear from context we simply write $N(u)$ and $N[u]$. Two vertices $u$ and $v$ are {\em true twins} if $N[u] =N[v]$ and {\em false twins} if $N(u) = N(v)$.
	A vertex is \emph{simplicial} if its neighborhood induces a clique.
	
	A subset $K$ of vertices is a \emph{$k$-plex} (resp. \emph{co-$k$-plex}) if $G[K]$ has minimum degree $|K| - k$ (resp. maximum degree $k-1$). Note that $k$-plexes are complements of co-$k$-plexes. \emph{Cliques} are $1$-plexes and \emph{stable sets} are co-1-plexes. 
	We denote by $\mathcal{K}(G)$ the set of inclusion-wise maximal cliques of $G$.
	
	An \textit{induced path} (resp. \textit{hole}) is a graph induced by a sequence of vertices $(v_1,\dots,v_p)$ whose edge set is $\{ v_iv_{i+1}:  \ i = 1,\dots, p-1 \} $ (resp. $\{ v_iv_{i+1}:  \ i = 1,\dots, p-1 \} \cup \{v_1v_p \}$ ). 
	A subset of vertices induces a path (resp. hole) if its elements can be ordered into a sequence inducing a path (resp. hole).
	The {\em length} of a path or hole is its number of edges. The { \em parity} of a path or hole is the parity of its length.

	The \textit{contraction} of an edge $uv\in E(G)$ yields a new graph $G/uv$ built from $G$ by deleting $u$ and $v$, adding a new vertex $w$, and adding the edges $wz$ for all $z\in (N_G(u) \cup N_G(v)) \setminus \{ u,v\}$. 
	For $F \subseteq E$, we denote by $G/F$ the graph obtained from $G$ by contracting all edges in $F$. 
	
	A graph is $t$-chordal if its holes have maximum size $t$; 3-chordal graphs are called \emph{chordal}. 
	\medskip
	
	Let $\mathcal{P} = \{x:Ax \le b\}\subseteq \mathds{R}^n$ be a \emph{polyhedron}, that is, the intersection of finitely many half-spaces. The \emph{dimension} of $\mathcal{P}\subseteq \mathds{R}^n$, denoted by dim $\mathcal{P}$, is the maximum number of affinely independent points in $\mathcal{P}$ minus one. 
	If $a \in \mathds{R}^n\setminus \{0\}$, $\alpha \in \mathds{R}$, then the inequality $a^\top x \le \alpha$ is said to be \emph{valid} for $\mathcal{P}$ if $\mathcal{P} \subseteq \{ x \in \mathds{R}^n : a^\top x \le \alpha \}.$  
	The polyhedron $\mathcal{P} \subseteq \mathds{R}^n$ is \emph{ full-dimensional } when dim~$\mathcal{P}$ is equal to $n$.
	\emph{Extreme points} of a polyhedron are its faces of dimension zero.
	Each extreme point satisfies $\textrm{dim}~\mathcal{P}$ linearly independent valid inequalities for $\mathcal{P}$ with equality. 
	A polyhedron $\mathcal{P} \subseteq \mathds{R}^n$ is said to be \emph{integer} if all its non-empty faces contain an integer point.
	A \emph{polytope} is a bounded polyhedron, equivalently, it can be defined as the convex hull of a finite number of points $P$ and is denoted by $\textrm{conv} (P)$. 
	
	\medskip
	
	A 0/1 minimal squared submatrix of odd size having two ones per row and per column is called an \emph{odd hole} and has a determinant of two. 
	
	\medskip 
	
	Given $x$ variables on the vertices of a graph $G$, an important ingredient for our proof is the following theorem characterizing the perfectness of a graph by the hyperplane representation of the convex hull of its stable sets' incidence vectors, \emph{ that is } \emph{its stable set polytope}. Note that we will not use the historical definition of perfect graphs, which requires equality between the coloring number and clique number for each induced subgraph.
	
	\begin{theorem}[\cite{CHVATAL1975138}]\label{the:WPGT}
		A graph $G$ is perfect if and only if its stable set polytope is described as follows:
		\begin{align}
			x(K) \le 1 && \forall K \in \mathcal{K} \\
			x_u \ge 0 && \forall u \in V
		\end{align} 
	\end{theorem}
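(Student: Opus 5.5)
Both directions of the theorem reduce to the Weak Perfect Graph Theorem (Lovász: $G$ is perfect iff $\overline G$ is perfect) and to LP duality applied to the clique-constrained polytope
\[ \mathrm{QSTAB}(G)=\{x\in\mathds{R}^V_{\ge 0} : x(K)\le 1 \ \forall K\in\mathcal{K}(G)\}. \]
We always have $\mathrm{STAB}(G)\subseteq \mathrm{QSTAB}(G)$, since every stable set meets each clique in at most one vertex. The content of the theorem is therefore the reverse inclusion under perfectness, and its converse.

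\textbf{Perfect implies equality.} We show that every extreme point of $\mathrm{QSTAB}(G)$ is integral. It suffices to prove that for every weight vector $w\in\mathds{Z}^V_{\ge 0}$ we have $\max\{w^\top x : x\in\mathrm{QSTAB}(G)\}=\alpha_w(G)$, the maximum weight of a stable set.
\begin{enumerate}[label=(\roman*)]
\item Replace each vertex $u$ by a clique of $w_u$ true twins, deleting $u$ when $w_u=0$. Lovász's replication lemma says this operation preserves perfectness.
\item Let $G_w$ be the replicated graph. Perfectness of $\overline{G_w}$ gives $\chi(\overline{G_w})=\omega(\overline{G_w})$. In $G_w$ this reads: its vertices can be covered by $\alpha(G_w)=\alpha_w(G)$ cliques.
\item Project this clique cover back onto $G$. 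We get maximal cliques $K_1,\dots,K_t$ with $t=\alpha_w(G)$, and each vertex $u$ lies in at least $w_u$ of them.
\item Summing the constraints $x(K_i)\le 1$ then gives $w^\top x\le \alpha_w(G)$ for every $x\in\mathrm{QSTAB}(G)$.
\end{enumerate}
Since this bound holds for all nonnegative integer $w$, and the case of negative weights is trivial by the nonnegativity constraints, the two polytopes coincide.

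\textbf{Equality implies perfect.} Assume $\mathrm{STAB}(G)=\mathrm{QSTAB}(G)$. The property passes to induced subgraphs: $\mathrm{QSTAB}(G[W])$ is the face $\{x_u=0 \ \forall u\notin W\}$ of $\mathrm{QSTAB}(G)$, and the maximal cliques of $G[W]$ are the restrictions of those of $G$. So it suffices to prove $\chi(G)=\omega(G)$ for such $G$, or, equivalently via the Weak Perfect Graph Theorem, the complementary statement.
\begin{enumerate}[label=(\roman*)]
\item Take $w=\mathbb 1$. Integrality gives $\max\{\mathbb 1^\top x : x\in\mathrm{QSTAB}(G)\}=\alpha(G)$.
\item Apply LP duality and take an optimal dual solution $y\ge0$ on the cliques with $\sum_{K\ni u} y_K\ge 1$ for every $u$. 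This is a fractional clique cover of value $\alpha(G)$.
\item Use complementary slackness to find a maximal clique $K$ meeting every maximum stable set.
\item Deleting $K$ lowers $\alpha$ by one. By induction, $V\setminus K$ can be covered by $\alpha(G)-1$ cliques, so $G$ has a clique cover of size $\alpha(G)$.
\end{enumerate}
Applying this to every induced subgraph shows that $\overline G$ is perfect, and the Weak Perfect Graph Theorem then makes $G$ perfect.

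The main obstacle is step (iii) of the second direction: extracting a single clique that hits all maximum stable sets. The plan is to argue by contradiction. If for each maximal clique $K$ some maximum stable set $S_K$ avoided $K$, then averaging the vectors $\chi^{S_K}$ with the dual weights would produce a point $x$ in $\mathrm{QSTAB}(G)$ with $\mathbb 1^\top x>\alpha(G)$, contradicting (i). Concretely, I expect to use Lovász's construction of a point of the form $\epsilon$-perturbation of the uniform combination of the $\chi^{S_K}$, exactly as in Chvátal's original argument.
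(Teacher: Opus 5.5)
\textbf{Gap in the first direction.} Steps (i)--(iii) of ``perfect implies equality'' use the wrong kind of twins. If $u$ is replaced by a \emph{clique} of $w_u$ true twins, a stable set of $G_w$ contains at most one copy of each vertex. So $\alpha(G_w)$ is the unweighted stability number of $G[\{u : w_u>0\}]$, not $\alpha_w(G)$. Also, one clique of a cover of $G_w$ can contain all copies of $u$, so the claim that $u$ lies in at least $w_u$ of the projected cliques fails. Already for $G=K_1$ with $w_u=2$ you get $G_w=K_2$, covered by one clique, and $t=1\neq 2=\alpha_w(G)$.

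The fix is to replace $u$ by $w_u$ pairwise \emph{non-adjacent} copies (false twins), each adjacent to all copies of the neighbours of $u$. Then:
\begin{itemize}
\item $\alpha(G_w)=\alpha_w(G)$.
\item $\overline{G_w}$ is obtained from $\overline G$ by clique replication. It is therefore perfect: the Weak Perfect Graph Theorem gives $\overline G$ perfect, and the replication lemma applied to $\overline G$ preserves this.
\item The copies of $u$ are pairwise non-adjacent, so each clique of a cover of $G_w$ contains at most one of them. This gives the multiplicity $w_u$, and your summation in (iv) goes through.
\end{itemize}
Alternatively, keep your true-twin replication but use a colouring of $G_w$ instead of a clique cover. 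That proves $\mathrm{STAB}(\overline G)=\mathrm{QSTAB}(\overline G)$ for every perfect $G$, and you conclude by applying it to $\overline G$, which is perfect by the Weak Perfect Graph Theorem.

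\textbf{Second direction.} The paper gives no proof of this statement; it is quoted from Chv\'atal. Your overall architecture (Weak Perfect Graph Theorem, replication, LP duality) is the standard one. Your second direction is that standard argument and is sound, including heredity to induced subgraphs and the induction in (iv).

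However, step (iii) is not an obstacle. By (i), $\chi^S$ is an optimal primal solution for every maximum stable set $S$. Complementary slackness then forces $|S\cap K|=1$ for every clique $K$ with $y_K>0$, so any such $K$ meets all maximum stable sets. Your averaging/perturbation plan is unnecessary. As described, it also does not yield a point with objective value exceeding $\alpha(G)$: averaging the $\chi^{S_K}$ gives an optimal point of value exactly $\alpha(G)$ at which every clique with $y_K>0$ is slack. The contradiction it produces is with complementary slackness, not with (i), so you may as well invoke complementary slackness directly.
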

	
	Note that odd holes are not perfect, nor are their complement; these two types of induced subgraphs are the ones to be forbidden to describe perfect graphs, as stated in the strong perfect graphs theorem~\cite{perfectgraphtheorem}.
	
	\section*{Motivations }

	The problem of finding "communities" in a network arises in social network analysis, notably for the recognition of cohesive subgroups. The most natural graph notion one could think of to describe such a group is a clique; however, asking that all pairs of vertices are adjacent may be too restrictive. Therefore, several relaxed clique definitions have been proposed; see~\cite{PATTILLO20139} for a survey. One type of such relaxed cliques is called a $k$-plex, and finding a maximum weighted such subset of vertices is NP-hard~\cite{bala}. By complementing the input graph, the problem of finding a maximum weighted co-$k$-plex is also.   This problem (or rather its complement) has been widely studied from an experimental point of view~\cite{grasp,PATTILLO20139,bala,WANG201779,ya,Zhou_Hu_Xiao_Fu_2021,roberto2}. The case $k=2$ has also been studied from a polyhedral point of view in natural variable space~\cite{pol} and extended variable space~\cite{dupontbouillard2025extendedformulationsmaximumweighted}, permitting to describe the polytope of contraction perfect graphs~\cite{dupontbouillard2024contractions} and holes.
	
	It is precisely at this point that new connected structures appear: components of a co-3-plex may be induced paths or holes, in contrast to the isolated vertices and edges arising for co-2-plexes.
	Asking whether extended formulations can still be obtained from perfect graphs is a natural and fundamental question.
	This work studies the co-3-plex problem on chordal graphs as the smallest graph class where these structural changes can be analyzed while preserving desirable decomposition properties.
	
	\section*{Contributions}       
	
	We build on the co-2-plex extended formulation of~\cite{dupontbouillard2025extendedformulationsmaximumweighted}. Their key insight is based on the observation that the subgraph induced by a co-2-plex is a set of isolated edges and vertices. This allows them to derive an extended formulation of the co-2-plex polytope from the stable set polytope of an auxiliary graph and characterize the co-2-plex polytope of contraction perfect graphs. 
	
	We extend this methodology to co-3-plexes. We propose an extended linear programming formulation for the co-3-plex polytope of chordal graphs with an exponential set of columns. However, we show that the pricing subproblem reduces to finding a maximum vertex and edge weighted induced path in a chordal graph, which we know to be solvable in polynomial time~\cite{dupontbouillard2025extendedformulationsinducedtree}.  This allows us to derive a column generation algorithm to solve the maximum co-3-plex problem. The hardness of this generalization tends to show the limit of such a framework for finding co-$k$-plexes; therefore, we end by discussing the limits of this approach.
	
	\section*{Exponential sized formulation for the co-3-plex polytope of chordal graphs}  \label{sec:co3plex}    
	
	It is important to note that in a chordal graph, the subgraph induced by a co-3-plex is a set of induced paths and triangles.       Let $\mathcal{T}(G)$ be the set of triangles, and $\mathcal{I}(G)$ be the set of induced paths of $G$ with at least one edge. 
	
	We denote by $\mathcal{A}(G)$ the graph having as vertex set $ V \cup \mathcal{T}(G) \cup \mathcal{I}(G)$. and where two vertices are adjacent if their union induces a connected subgraph of $G$ (where vertices of $\mathcal{A}(G)$ associated with elements $v$ of $V$ are considered singletons $\{v\}$). Each node of $\mathcal{A}(G))$ represents a connected component which may belong to a co-3-plex. Moreover, two such components may belong to the same co-3-plex if their union yields a disconnected graph.
	
	\medskip 
	
	Note that $\mathcal{A}(G)$ can be built from $G$ by adding true twins and contracting edges: for each non-singleton $L \in \mathcal{T}(G) \cup \mathcal{I}(G)$ add a twin to each vertex $v \in L$ and contract the added set of twins to obtain vertex $L$ of $\mathcal{A}(G)$. Adding true twins and contracting them is a way to "compress" each subgraph into a single additional node. Figure~\ref{fig:construct} provides an example of this construction on a simple graph.    
	
	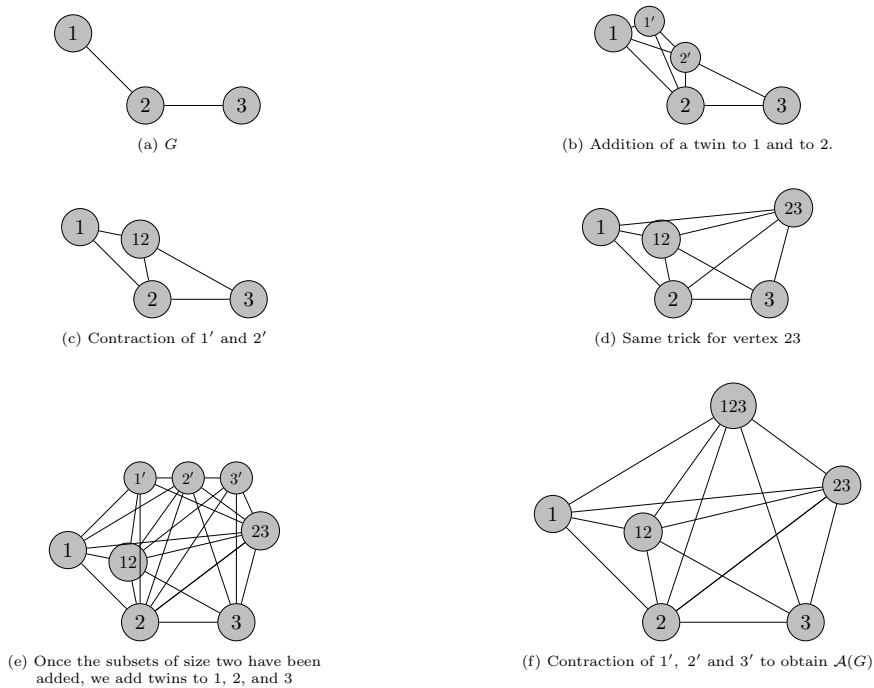
\begin{figure} 
		
		\scalebox{0.8}{\begin{subfigure}[t]{0.5\textwidth}    \centering    
			\begin{tikzpicture}[state/.style={circle, draw, minimum size=0.5 cm},scale = 0.4]          \node[state,draw,circle,fill=gray!50,scale = 1] (1)at(0,3) {$1$};     \node[state,draw,circle,fill=gray!50,scale = 1] (2)at(3,0) {$2$};     \node[state,draw,circle,fill=gray!50,scale = 1] (4)at(7,0) {$3$};          \draw (1) -- (2) ;     \draw (2) -- (4) ;        
			\end{tikzpicture}        \caption{$G$}  
		\end{subfigure}}   \hfill  
		\scalebox{0.8}{\begin{subfigure}[t]{0.5\textwidth}    \centering    
			
			\begin{tikzpicture}[state/.style={circle, draw, minimum size=0.5cm},scale = 0.4]          \node[state,draw,circle,fill=gray!50,scale = 1] (1)at(0,3) {$1$};     \node[state,draw,circle,fill=gray!50,scale = 0.7] (5)at(1.5,3.5) {$1'$};     \node[state,draw,circle,fill=gray!50,scale = 0.7] (6)at(3,2) {$2'$};     \node[state,draw,circle,fill=gray!50,scale = 1] (2)at(3,0) {$2$};     \node[state,draw,circle,fill=gray!50,scale = 1] (4)at(7,0) {$3$};          \draw (1) -- (2) ;     \draw (2) -- (4) ;     \draw(1) -- (5);     \draw (2) -- (5);     \draw (5) -- (6);     \draw (6) -- (2);     \draw (6) -- (4);     \draw (6) -- (1);          \end{tikzpicture}    
		
			\caption{Addition of a twin to $1$ and to $2$.}   \end{subfigure} }
		
		\bigskip   
		\scalebox{0.8}{
		\begin{subfigure}[t]{0.5\textwidth}    \centering   
			
			\begin{tikzpicture}[state/.style={circle, draw, minimum size=0.5cm},scale = 0.4]     \node[state,draw,circle,fill=gray!50,scale = 1] (1)at(0,3) {$1$};     \node[state,draw,circle,fill=gray!50,scale = 0.85] (6)at(2.5,2.5) {$12$};     \node[state,draw,circle,fill=gray!50,scale = 1] (2)at(3,0) {$2$};     \node[state,draw,circle,fill=gray!50,scale = 1] (4)at(7,0) {$3$};          \draw (1) -- (2) ;     \draw (2) -- (4) ;     \draw (6) -- (2);     \draw (6) -- (4);     \draw (6) -- (1);    \end{tikzpicture}    \caption{Contraction of $1’$ and $2’$}  
			
		\end{subfigure}}   \hfill  
		\hfill
		\scalebox{0.8}{\begin{subfigure}[t]{0.5\textwidth}    \centering   
			
			\begin{tikzpicture}[state/.style={circle, draw, minimum size=0.5cm},scale = 0.4]    
				\node[state,draw,circle,fill=gray!50,scale = 1] (1)at(0,3) {$1$};    
				\node[state,draw,circle,fill=gray!50,scale = 0.85] (6)at(2.5,2.5) {$12$};    
				\node[state,draw,circle,fill=gray!50,scale = 1] (2)at(3,0) {$2$};     
				\node[state,draw,circle,fill=gray!50,scale = 1] (4)at(7,0) {$3$}; 
				\node[state,draw,circle,fill=gray!50,scale = 0.85] (7)at(8,3.8) {$23$};             
				\draw (1) -- (2) ;     
				\draw (2) -- (4) ;     
				\draw (6) -- (2);     
				\draw (6) -- (4);     
				\draw (6) -- (1); 
				\draw (4) -- (7);     
				\draw (2) -- (7);     
				\draw (6) -- (7);              
				\draw (1) -- (7);     
				
			\end{tikzpicture}    
			\caption{Same trick for vertex $23$}  
			
		\end{subfigure}  }

		\bigskip
		\scalebox{0.8}{
		\begin{subfigure}[t]{0.5\textwidth}   
			
			\centering   
			
			\begin{tikzpicture}[state/.style={circle, draw, minimum size=0.5cm},scale = 0.4]     
				\node[state,draw,circle,fill=gray!50,scale = 1] (1)at(0,3) {$1$};     
				\node[state,draw,circle,fill=gray!50,scale = 1] (2)at(3,0) {$2$};     
				\node[state,draw,circle,fill=gray!50,scale = 1] (4)at(7,0) {$3$};     
				\node[state,draw,circle,fill=gray!50,scale = 0.85] (5)at(2.5,2.5) {$12$};     
				\node[state,draw,circle,fill=gray!50,scale = 0.85] (7)at(8,3.8) {$23$};     
				\node[state,draw,circle,fill=gray!50,scale = 0.75] (9)at(3,6) {$1’$};     
				\node[state,draw,circle,fill=gray!50,scale = 0.75] (10)at(5,6) {$2’$};     \node[state,draw,circle,fill=gray!50,scale = 0.75] (11)at(7,6) {$3’$};     
				\draw (1) -- (2) ;     
				\draw (2)-- (4) ; 
				\draw (10) -- (1);     
				\draw (5) -- (7);     
				\draw (1) --(5);     
				\draw (2) --(5);     
				\draw (4) -- (7);     
				\draw (2) -- (7);     
				\draw (2) -- (7);              
				\draw (1) -- (7);     
				\draw (4) -- (5);          
				\draw (1) -- (9);     
				\draw (2) -- (9);     
				\draw (5) -- (9);     
				\draw (7) -- (9);          
				\draw (2) -- (10);     
				\draw (4) -- (10);     
				\draw (5) -- (10);     
				\draw (7) -- (10);          
				\draw (2) -- (11);     
				\draw (4) -- (11);    
				\draw (5) -- (11);     
				\draw (7) -- (11);          
				\draw (9) -- (10);     
				\draw (10) -- (11);        
				
			\end{tikzpicture}
			
			\caption{Once the subsets of size two have been added, we add twins to $1$, $2$, and $3$ }      
			
		\end{subfigure}  }   
		\hfill
		\scalebox{0.8}{\begin{subfigure}[t]{0.5\textwidth}    \centering    \begin{tikzpicture}[state/.style={circle, draw, minimum size=0.5cm},scale = 0.6]     \node[state,draw,circle,fill=gray!50,scale = 1] (1)at(0,3) {$1$};     \node[state,draw,circle,fill=gray!50,scale = 1] (2)at(3,0) {$2$};     \node[state,draw,circle,fill=gray!50,scale = 1] (4)at(7,0) {$3$};     \node[state,draw,circle,fill=gray!50,scale = 0.85] (5)at(2.5,2.5) {$12$};     \node[state,draw,circle,fill=gray!50,scale = 0.85] (7)at(8,3.8) {$23$};     \node[state,draw,circle,fill=gray!50,scale = 0.85] (10)at(5,6) {$123$};     \draw (1) -- (2) ;     \draw (2) -- (4) ;          \draw (5) -- (7);          \draw (1) --(5);     \draw (2) --(5);     \draw (4) -- (7);     \draw (2) -- (7);     \draw (2) -- (7);               \draw (1) -- (7);     \draw (4) -- (5);               \draw (1) -- (10);     \draw (2) -- (10);     \draw (4) -- (10);     \draw (5) -- (10);     \draw (7) -- (10);               \color{black}         \end{tikzpicture}    \caption{Contraction of $1’, \ 2’$ and $3’$ to obtain $\mathcal{A}(G)$}   \end{subfigure} } \caption{Construction of $\mathcal{A}(G)$ by adding twins and contracting edges.}   \label{fig:construct}     \end{figure}  
	
	\begin{obs}\label{obs:bijection}   There is a one-to-one correspondence between the  co-3-plexes of a chordal graph $G$ and the stable sets of $\mathcal{A}(G)$.  \end{obs}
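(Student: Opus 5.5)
The plan is to write down the two natural maps explicitly and check that they are mutually inverse. Send a co-3-plex $C$ of $G$ to the set $\Phi(C)$ of vertex sets of the connected components of $G[C]$. Send a stable set $S$ of $\mathcal{A}(G)$ to $\Psi(S)=\bigcup_{L\in S} L$, where an element $v\in V$ is read as the singleton $\{v\}$. The proof then has three steps: $\Phi(C)$ is a stable set of $\mathcal{A}(G)$, $\Psi(S)$ is a co-3-plex, and $\Psi\circ\Phi$ and $\Phi\circ\Psi$ are the identity maps.

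\emph{Step 1: $\Phi$ is well defined.} Let $D$ be a component of $G[C]$. It is connected and has maximum degree at most $2$, so $D$ is a single vertex, an induced path with at least one edge, or an induced cycle. It is induced because it is a component of the induced subgraph $G[C]$. An induced cycle of length at least $4$ would be a hole, which a chordal graph does not have. So $D$ is a singleton, a triangle, or an element of $\mathcal{I}(G)$, and hence a vertex of $\mathcal{A}(G)$. Triangles are not induced paths, so each component is assigned to exactly one vertex of $\mathcal{A}(G)$. Two distinct components $D,D'$ are disjoint and have no edge between them, so $G[D\cup D']$ is disconnected. Thus $D$ and $D'$ are non-adjacent in $\mathcal{A}(G)$, and $\Phi(C)$ is a stable set.

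\emph{Step 2: $\Psi$ is well defined.} Let $L\neq L'$ be elements of a stable set $S$. The key remark is that they must be vertex-disjoint and have no edge between them. Indeed, each of $L$ and $L'$ induces a connected subgraph. If they shared a vertex, or if some edge joined them, then $G[L\cup L']$ would be connected, and $L,L'$ would be adjacent in $\mathcal{A}(G)$. Consequently $G[\Psi(S)]$ is the disjoint union of the graphs $G[L]$ for $L\in S$, with no edges between different pieces. Each piece (a singleton, an induced path, or a triangle) has maximum degree at most $2$. Hence $\Psi(S)$ is a co-3-plex.

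\emph{Step 3: the maps are inverse.} $\Psi(\Phi(C))=C$ because the components of $G[C]$ partition $C$. For $\Phi(\Psi(S))=S$, Step 2 shows that the sets $L\in S$ are connected, pairwise disjoint, and pairwise non-adjacent in $G$. They are therefore exactly the connected components of $G[\Psi(S)]$. The empty set corresponds to the empty stable set.

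The only delicate point is Step 2, where adjacency in $\mathcal{A}(G)$ has to encode both overlap and adjacency in $G$. This works because every vertex of $\mathcal{A}(G)$ induces a connected subgraph of $G$: overlapping or touching connected sets have a connected union. Chordality is used only in Step 1, to exclude components that are induced cycles of length at least $4$.
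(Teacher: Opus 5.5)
Your proof is correct and follows the same route as the paper. You send a co-3-plex to the vertex sets of the components of $G[C]$, which chordality forces to be singletons, induced paths or triangles, and you send a stable set of $\mathcal{A}(G)$ to the union of its members; you are more careful than the paper in justifying that members of a stable set are pairwise disjoint and non-adjacent in $G$ (the paper just says ``by construction'') and in checking that the two maps are mutually inverse.
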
 
	
	\begin{proof}   Let $S$ be a co-3-plex of $G$, and $(L_1,\dots,L_\ell)$ be a partition of $S$ into the connected components of $G[S]$.   As the connected components of a co-3-plex are either induced paths or triangles, we have that $L_1,\dots,L_\ell$ belong to the vertex set of $\mathcal{A}(G)$. No two elements  $L_i, \ L_j$ of $K_\mathcal{A}$ are adjacent, otherwise $G[L_i \cup L_j]$ would be connected.      Given a stable set $S_{\mathcal{A}} = \{ L_1,\dots, L_\ell\}$ of $\mathcal{A}(G)$, each vertex $L_j$ induces a connected component of $G$ Moreover, $G[\bigcup_{i=1}^\ell L_i]$ has exactly $\ell$ connected components, all of degree at most 2, by construction. This implies that $\bigcup_{i=1}^\ell L_i$ induces a co-3-plex. 
	\end{proof}
	
	As adding true twins and contracting edges preserves chordality~\cite{dupontbouillard2024contractions}, and the set of induced paths of size 1 induces a subgraph of $\mathcal{A}(G)$ isomorphic to $G$, we obtain the following.   
	
	\begin{corollary}\label{the:characchordal}   A graph $G$ is chordal if and only if $\mathcal{A}(G)$ is.  \end{corollary}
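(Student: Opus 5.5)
The plan has two directions, one of which is immediate.

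\emph{If $\mathcal{A}(G)$ is chordal, then $G$ is chordal.} The singleton vertices $\{v\}$, $v\in V$, induce in $\mathcal{A}(G)$ a copy of $G$: two singletons $\{u\},\{v\}$ are adjacent in $\mathcal{A}(G)$ exactly when $G[\{u,v\}]$ is connected, that is, when $uv\in E$. A hole of length at least four in $G$ would therefore be a hole in $\mathcal{A}(G)$. Since chordality is hereditary, $G$ is chordal.

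\emph{If $G$ is chordal, then $\mathcal{A}(G)$ is chordal.} I would use the construction described before Observation~\ref{obs:bijection}. Enumerate the non-singleton elements $L_1,\dots,L_m$ of $\mathcal{T}(G)\cup\mathcal{I}(G)$, and build a sequence of graphs $G=G_0,G_1,\dots,G_m$. The graph $G_i$ is obtained from $G_{i-1}$ by adding a true twin $v'$ of each vertex $v\in L_i$, where twins are taken with respect to the original vertices, and then contracting the edges of $G_{i-1}[\{v' : v\in L_i\}]$ into a single new vertex $w_{L_i}$. This subgraph is connected because $L_i$ induces a connected subgraph of $G$, and the twins copy those adjacencies.

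The main step is to check that $G_m=\mathcal{A}(G)$. This is where the care lies, since the twins of $L_i$ are also adjacent to the previously created vertices $w_{L_j}$. After contraction, the neighbourhood of $w_{L_i}$ is the union of the neighbourhoods of the twins, minus the twins themselves. Among the singletons, this union is $N_G[L_i]$, and these are exactly the singletons $\{z\}$ with $G[L_i\cup\{z\}]$ connected. Among earlier compressed vertices, $w_{L_j}$ is adjacent to some $v'$ iff $w_{L_j}$ is adjacent to $v$. By induction, that holds iff $L_j\cup\{v\}$ is connected, i.e. $v\in N[L_j]$. So $w_{L_i}w_{L_j}$ is an edge iff $L_i\cap N[L_j]\neq\emptyset$, which is iff $G[L_i\cup L_j]$ is connected, because both sets are themselves connected. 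Later steps never delete or alter earlier vertices, so the adjacency $w_{L_i}w_{L_j}$ is fixed when the later vertex is created. This gives exactly the edge set of $\mathcal{A}(G)$.

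Chordality then propagates along the sequence. Adding a true twin preserves chordality, because a hole containing $v'$ cannot also contain $v$ (they would form a chord-like triangle), so replacing $v'$ by $v$ yields a hole in the smaller graph. Edge contraction preserves chordality as well. Both facts are the result cited from~\cite{dupontbouillard2024contractions}. Applying them $m$ times shows that $\mathcal{A}(G)$ is chordal. I expect the bookkeeping for the neighbourhoods of the compressed vertices to be the only delicate point.
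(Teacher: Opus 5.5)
Your proof is correct and follows the paper's argument. The paper likewise uses that $G$ is the induced subgraph of $\mathcal{A}(G)$ on the singletons, and that $\mathcal{A}(G)$ arises from $G$ by adding true twins and contracting, two operations that preserve chordality; your inductive check that this twin-and-contract sequence really yields $\mathcal{A}(G)$ fills in a step the paper only asserts.
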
      
	Another important remark is that the set of maximal cliques of a chordal graph $G$ and the one of $\mathcal{A}(G)$ are also in one-to-one correspondence.    
	
	\begin{lemma} \label{lem:cliquechordal}   Given a chordal graph $G = (V,E)$, the maximal cliques of $\mathcal{A}(G)$ are of the following form:   
		\begin{equation}    K  \cup \{L \in V(\mathcal{A}(G)) \setminus V : L \cap K \neq \emptyset \}  \quad \forall K \in \mathcal{K}(G). \label{eq:cliqueAg}   \end{equation}
	\end{lemma}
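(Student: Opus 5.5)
The plan is to represent $G$ as an intersection graph of subtrees of a tree, observe that $\mathcal{A}(G)$ then inherits such a representation, and read off the maximal cliques using the Helly property for subtrees. Since $G$ is chordal, fix a clique tree $T$ of $G$: each $v\in V$ gets a subtree $T_v$ of $T$, with $uv\in E$ iff $T_u\cap T_v\neq\emptyset$. For every node $L$ of $\mathcal{A}(G)$ (a singleton, a triangle or an induced path, hence a connected vertex set of $G$), set $T_L=\bigcup_{v\in L}T_v$. Because $G[L]$ is connected and adjacent vertices have intersecting subtrees, $T_L$ is a subtree of $T$.

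The first step is to check that $G[L\cup L']$ is connected iff $T_L\cap T_{L'}\neq\emptyset$. Both $L$ and $L'$ are connected, so their union is connected iff they share a vertex or some edge of $G$ joins them. Either case amounts to some $T_u$ with $u\in L$ meeting some $T_{v}$ with $v\in L'$. Hence $\mathcal{A}(G)$ is the intersection graph of the subtrees $\{T_L\}$, which also re-proves Corollary~\ref{the:characchordal}.

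Next I show that each set $Q_K := K\cup\{L\in V(\mathcal{A}(G))\setminus V : L\cap K\neq\emptyset\}$, for $K\in\mathcal{K}(G)$, is a maximal clique of $\mathcal{A}(G)$. Since $K$ is a clique, the Helly property gives a node $t_K$ of $T$ lying in every $T_k$, $k\in K$. Every element of $Q_K$ contains a vertex of $K$, so its subtree contains $t_K$. The subtrees therefore pairwise intersect, and $Q_K$ is a clique.

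The main obstacle is maximality of $Q_K$: we must rule out a node $M\notin Q_K$, i.e.\ with $M\cap K=\emptyset$, that is adjacent to all of $Q_K$. Suppose such an $M$ exists. Then $T_M$ meets every $T_k$, $k\in K$, and these subtrees also pairwise meet. By Helly there is a node $t\in T_M\cap\bigcap_{k\in K}T_k$. Since $t\in T_M$, we have $t\in T_m$ for some $m\in M$. This $m$ is adjacent to every vertex of $K$ and $m\notin K$, so $K\cup\{m\}$ is a larger clique, contradicting maximality of $K$. (When $M$ is a singleton $\{w\}$ this is just the usual maximality of $K$.)

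Finally I show there are no other maximal cliques. Let $Q$ be any maximal clique of $\mathcal{A}(G)$. Its subtrees pairwise intersect, so Helly gives a node $t$ common to all $T_L$, $L\in Q$. The bag $B_t=\{v: t\in T_v\}$ is a clique of $G$; extend it to some $K\in\mathcal{K}(G)$. Each $L\in Q$ has a vertex $v$ with $t\in T_v$, so $v\in B_t\subseteq K$ and $L\cap K\neq\emptyset$. For singletons this means $L\in K$. Hence $Q\subseteq Q_K$, and since $Q_K$ is a clique, maximality of $Q$ forces $Q=Q_K$. This gives exactly the description~\eqref{eq:cliqueAg}, and shows $K\mapsto Q_K$ is a bijection, since $Q_K\cap V=K$.
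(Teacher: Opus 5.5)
Your proof is correct, but it takes a different route from the paper. The paper works inside its twin-addition/contraction construction and proves the statement step by step for the intermediate graphs $G'$. It argues by contradiction: if a maximal clique of $G'$ contains a new vertex $L$ while its $V$-part $K$ is disjoint from $L$, then a path $P$ inside $L$, together with two vertices $w_1,w_2\in K$, forms a hole of length at least $4$. This contradicts the chordality of $G'$, which the paper imports via Corollary~\ref{the:characchordal} from the cited fact that adding twins and contracting preserves chordality. You instead lift a clique tree $T$ of $G$ to $\mathcal{A}(G)$ through $T_L=\bigcup_{v\in L}T_v$ and let the Helly property for subtrees do all the work. This gains several things. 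First, it shows that $T$, with bag $Q_K$ at node $K$, is itself a clique tree of $\mathcal{A}(G)$, so chordality of $\mathcal{A}(G)$ comes for free, without intermediate graphs or the external preservation result. Second, it explicitly verifies all three parts of the statement (each $Q_K$ is a clique, each $Q_K$ is maximal, every maximal clique equals some $Q_K$) together with injectivity of $K\mapsto Q_K$. The paper's sketch essentially targets only the inclusion ``every member of a maximal clique meets its $V$-part'' and leaves the other verifications, and its minimality step, implicit. Third, you only use that each node of $\mathcal{A}(G)$ is a connected vertex set of $G$ and that adjacency means connected union, so your argument carries over verbatim to the general-$k$ auxiliary graph discussed at the end of the paper. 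The paper's route, for its part, is phrased purely in terms of forbidden holes and reuses the construction it relies on elsewhere, avoiding the subtree representation of chordal graphs.
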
 
	\begin{proof}   The proof is done by showing the stronger result stating that the maximal cliques of a graph $G'$, obtained by adding twins to a graph $G = (V,E)$ and contracting them, are of the form $K  \cup \{L \in V(G') \setminus V : L \cap K \neq \emptyset \} \quad  \forall K \in \mathcal{K}(G)$.
		This  implies that equality~\eqref{eq:cliqueAg} holds at each step of  $\mathcal{A}(G)$'s construction.

	By contradiction, let $G$ be a chordal graph and suppose that adding a minimal set of twins to $G$ and contracting them yields a graph $G'$ with a new vertex $L$ and a maximal clique $K_{G'}$ containing $L$ such that $K =V(G') \cap V$ has an empty intersection with $L$. Minimality implies that there exists a vertex $w_1$ of $K$ adjacent to exactly one vertex $v$ of $L$. Let $P$ be a path of $L$ starting in $v$ whose other extremity is adjacent to an element $w_2$ of $K$ and whose interior is non-adjacent to any element of $K$. By construction, $P \cup \{w_1,w_2\}  $ induces a hole of size at least 4 of $G'$, a contradiction to Corollary~\ref{the:characchordal}.   
	\end{proof}  
	
	Now, we give the exponential dimensional space formulation of the co-3-plex polytope of chordal graphs. Let $\mathcal{T}_K^p$ denote the set of triangles and $\mathcal{I}_K^p$ the set of induced paths containing at least $p$ elements of $K$. We introduce variable $p$ (resp. $t$) associated with the induced path (resp. triangles) of $G$. We encode a co-3-plex $S$ of $G$ by the vector of $x, \ p,$ and $t$ variables so that $x_v = 1 \ \forall v \in V $ and $p_P = 1$ (resp. $t_T = 1$) if $P$ (resp. $T$) induces a connected component of $G[S]$.
	   
	\begin{theorem}  
		A graph $G$ is chordal if and only if the following polytope is integer~:  
		\begin{align}   
			x(K)- t (\mathcal{T}_K^2) - 2 t(\mathcal{T}_K^3) - p(\mathcal{I}_K^2)  \le 1 && \forall K \in \mathcal{K}(G) \label{iq:enoncecliqueTheCo3Plex}\\
			t(\mathcal{T}_{v}^1) + p(\mathcal{I}_{v}^1) \le x_v && \forall v \in V \label{iq:enoncestarTheCo3Plex}\\ 
			p,t \ge 0 \label{iq:enoncetrivialTheCo3plex} 
		\end{align}     
	\end{theorem}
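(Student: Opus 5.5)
The plan is to obtain the polytope of the statement as a linear image of the clique-constrained stable set polytope of $\mathcal{A}(G)$, and then to invoke Theorem~\ref{the:WPGT}. Introduce a variable $y_L$ for every vertex $L$ of $\mathcal{A}(G)$: $y_v$ for singletons $v \in V$, $y_P$ for $P \in \mathcal{I}(G)$, and $y_T$ for $T \in \mathcal{T}(G)$. The intended change of variables is
\[
x_v = y_v + \sum_{L \ni v,\, |L|\ge 2} y_L, \qquad p_P = y_P, \qquad t_T = y_T .
\]
Under this map, $x_v$ records whether $v$ is covered by the co-3-plex, and $p,t$ record which components are chosen. This agrees with the encoding described just before the statement, and by Observation~\ref{obs:bijection} the integer points correspond exactly to co-3-plexes.

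The first step is to check that the map sends the system $\{y(K_{\mathcal{A}}) \le 1,\ y \ge 0\}$ bijectively and linearly onto the system of the statement. The maximal cliques $K_{\mathcal{A}}$ are given by Lemma~\ref{lem:cliquechordal}. For $K \in \mathcal{K}(G)$, the clique of $\mathcal{A}(G)$ is $K$ together with every non-singleton $L$ meeting $K$, so
\[
y(K_{\mathcal{A}}) = \sum_{v\in K} y_v + \sum_{L\cap K\ne\emptyset} y_L .
\]
On the other hand, $x(K) = \sum_{v\in K} y_v + \sum_L |L\cap K|\, y_L$. An induced path meets a clique in at most $2$ vertices, since three clique vertices would form a triangle inside the path. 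A triangle meets $K$ in $1$, $2$ or $3$ vertices. Subtracting the overcounts therefore gives
\[
y(K_{\mathcal{A}}) = x(K) - t(\mathcal{T}^2_K) - 2t(\mathcal{T}^3_K) - p(\mathcal{I}^2_K),
\]
where a triangle in $\mathcal{T}^2_K$ with $|T\cap K| = 3$ is counted in both sums, giving the total $2$. This identifies~\eqref{iq:enoncecliqueTheCo3Plex} with the clique inequalities. The constraint $y_v \ge 0$ becomes $x_v - t(\mathcal{T}^1_v) - p(\mathcal{I}^1_v) \ge 0$, which is~\eqref{iq:enoncestarTheCo3Plex}, and $y_L \ge 0$ for non-singletons becomes~\eqref{iq:enoncetrivialTheCo3plex}. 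The map is invertible, with $y_v$ recovered from $x_v$ minus the sum, so the two polytopes are affinely isomorphic and their extreme points correspond.

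For the direction ``chordal implies integer'', Corollary~\ref{the:characchordal} gives that $\mathcal{A}(G)$ is chordal, hence perfect, so by Theorem~\ref{the:WPGT} its clique polytope is integral. The unimodular change of variables above then transports integrality to the polytope of the statement. Its integer points have $x,p,t$ in $\{0,1\}$ and encode the co-3-plexes.

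The converse is the delicate step. If $G$ is not chordal, it has a hole $H$ of length $\ell\ge 4$. I would try to exhibit a fractional extreme point. If $\ell$ is odd, take $x = \frac12$ on $H$ with all path and triangle variables zero; this is the odd-hole point restricted through the singleton variables. Tightness of the clique inequalities on the edges of $H$, of~\eqref{iq:enoncestarTheCo3Plex} elsewhere, and of nonnegativity gives enough linearly independent tight constraints, as in the classical odd-hole argument in $\mathcal{A}(G)$ restricted to the copy of $G$. If $\ell$ is even, $G$ itself may be perfect, so one must use the path vertices instead. The idea is to form an odd hole in $\mathcal{A}(G)$ by replacing two consecutive hole vertices with their edge-path $L$: the sequence $L, v_3,\dots,v_\ell$ is a hole of odd length $\ell-1 \ge 3$. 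This needs $\ell-1\ge5$, so $\ell\ge 6$; for $\ell=4$ I would instead use the edges $v_1v_2$ and $v_3v_4$ together with the adjacencies to the vertices $v_2v_3$ and $v_4v_1$, creating a $C_5$-like structure. Setting $y=\frac12$ on this odd hole in $\mathcal{A}(G)$ and mapping it back should give a fractional extreme point. Verifying that this point is extreme, that is, that enough tight constraints exist after restricting to a suitable face, is the main obstacle. I would handle it by restricting to the face where all variables outside the hole are zero and arguing within the odd-hole submatrix, whose determinant is two.
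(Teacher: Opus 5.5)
Your forward direction and your treatment of holes of length at least $5$ follow the paper's proof. You use the same substitution $x_v=y_v+\sum_{L\ni v}y_L$, the same appeal to Lemma~\ref{lem:cliquechordal} and Theorem~\ref{the:WPGT}, and the same odd holes $H$ or $(H\cup\{v_1v_2\})\setminus\{v_1,v_2\}$ of $\mathcal{A}(G)$. There is one small slip. If $\mathcal{T}^2_K$ and $\mathcal{T}^3_K$ are read as ``at least'', a triangle inside $K$ gets coefficient $1+2=3$, not $2$. Your identity needs the ``exactly'' reading, i.e.\ coefficient $|L\cap K|-1$, which is what the paper's derivation actually uses.

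\textbf{The gap is the case $\ell=4$.} The vertices $v_1v_2$, $v_2v_3$, $v_3v_4$, $v_4v_1$ of $\mathcal{A}(G)$ are pairwise adjacent: consecutive ones share a vertex, and opposite ones have the whole hole as their union. So they form a $K_4$, not a $C_5$-like structure. In fact every edge- or $P_3$-vertex of the hole is adjacent to all four hole vertices. For $G=C_4$, $\mathcal{A}(G)$ is $C_4$ plus an $8$-vertex clique complete to it, which is perfect. No argument of the form ``$\tfrac12$ on an odd hole of $\mathcal{A}(G)$'' can work there.

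The missing idea is this. When $G$ is not chordal, Lemma~\ref{lem:cliquechordal} no longer applies. The statement's system then corresponds only to the clique inequalities of $\mathcal{A}(G)$ indexed by $K\in\mathcal{K}(G)$, and this is strictly weaker than the clique relaxation of $\mathcal{A}(G)$. Concretely, $\{v_1,v_2,v_3v_4\}$ is a triangle of $\mathcal{A}(G)$. It lies in no set $K\cup\{L: L\cap K\neq\emptyset\}$, because no clique of $G$ contains $v_1$, $v_2$ and one of $v_3,v_4$.

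The paper's fractional point is $\tfrac12$ on this triangle: $x_{v_1}=x_{v_2}=x_{v_3}=x_{v_4}=p_{v_3v_4}=\tfrac12$, all else $0$. It satisfies every constraint. It is extreme, as follows. Take tight the inequalities~\eqref{iq:enoncestarTheCo3Plex} at every vertex other than $v_1,v_2$ (including $v_3,v_4$), and nonnegativity for every other $p,t$. Then the clique inequalities for maximal cliques containing $\{v_1,v_2\}$, $\{v_2,v_3\}$ and $\{v_4,v_1\}$ read $y_{v_1}+y_{v_2}=y_{v_2}+y_{v_3v_4}=y_{v_1}+y_{v_3v_4}=1$, which forces the value $\tfrac12$.

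Alternatively, you can skip extremality altogether, which also settles the obstacle you flagged. Every edge of $\mathcal{A}(G)$ lies in some set $K\cup\{L:L\cap K\neq\emptyset\}$, so the integer points are exactly the stable sets of $\mathcal{A}(G)$. The value $y_{v_1}+y_{v_2}+y_{v_3v_4}=\tfrac32>1$ then already places the point outside their convex hull.
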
 
	\begin{proof}   $(\Rightarrow)$    Suppose that $G$ is chordal, then $\mathcal{A}(G)$ also is by Corollary~\eqref{the:characchordal}.  Let us consider the following set of variables: $z_u \ \forall u \in V$, $t_T \ \forall T \in \mathcal{T}(G)$, \ $p_I, \ \forall I \in \mathcal{I}(G)$ associated with the vertices of $\mathcal{A}(G)$. Using Lemma~\ref{lem:cliquechordal} and Theorem~\ref{the:WPGT} we get that the stable set polytope of $\mathcal{A}(G)$ is described by the following system :   
		\begin{align}    z(K) + t(\mathcal{T}_K^1) + p(\mathcal{I}_K^1) \le 1 && \forall K \in \mathcal{K}(G) \label{iq:cliqueTheCo3plex}\\    z,t,p \ge 0  \label{iq:trivialTheCo3plex}     \end{align}         The bijection pointed out in Observation~\ref{obs:bijection} maps a stable set $S_{\mathcal{A}}$ of $\mathcal{A}(G)$ to a co-3-plex $S$ of $G$ by the following : $\bigcup_{L \in S_{\mathcal{A}}} L = S $. Following this idea, we can express variables $x_u, \ \forall u \in V$ as a combination of $z, p$ and $t$.  The following relation expresses the fact that a co-3-plex of $G$ contains a vertex $v$ if and only if its associated stable set of $\mathcal{A}(G)$ contains exactly one element containing $v$.   
		\begin{equation}    x_u = z_u + t(\mathcal{T}_{u}^1) + p(\mathcal{I}_{u}^1) \ \forall u \in V.    \label{equation:rel}    \end{equation} 
		
		We use equality~\eqref{equation:rel}, we get rid of $z$ variables in \eqref{iq:cliqueTheCo3plex}--\eqref{iq:trivialTheCo3plex} to obtain \eqref{iq:enoncecliqueTheCo3Plex}--\eqref{iq:enoncetrivialTheCo3plex} as follows:  
		\begin{itemize}    
			\item inequalities~\eqref{iq:enoncecliqueTheCo3Plex} are obtained from \eqref{iq:cliqueTheCo3plex} by noticing that each variable $p$ or $t$ associated with an element, say $P$ such that $i = |P \cap K| \ge 1$  ends up with a coefficient $-i+1$ (each $z_v$ for $v \in P\cap K$ will be replaced by equality~\eqref{equation:rel}) 
			\item inequalities~\eqref{iq:enoncestarTheCo3Plex} appear by combining \eqref{equation:rel} and the nonnegativity constraints on $z$.  
		\end{itemize}  
		
		$(\Leftarrow)$   By contradiction, suppose that $G$ is not chordal. 
		Suppose that $G$ contains a hole induced by $H$ of length at least 5, if it is odd then $H$ induces an odd hole of $\mathcal{A}(G)$ and otherwise given any two adjacent vertices $u, \ v$ of $H$, we get that $(H \cup \{uv \}) \setminus \{u,v\}$  induces an odd hole of $\mathcal{A}(G)$. 
		In both cases, $\mathcal{A}(G)$ is not perfect by the strong perfect graph theorem, and backtracking the previous proof yields that \eqref{iq:enoncecliqueTheCo3Plex}--\eqref{iq:enoncetrivialTheCo3plex} is not integer. 
		If $G$ contains a hole induced by $(u,v,w,t)$, we prove that the point $(x^*,y^*)$ satisfying $x_u^* = x_v^* = x_{w}^* = x_t^* = p_{wt}^* = \frac{1}{2}$ and whose other components are equal to 0 is a fractional extreme point of \eqref{iq:enoncecliqueTheCo3Plex}--\eqref{iq:enoncetrivialTheCo3plex}. 
		First, we prove that $(x^*,y^*)$ is a point of \eqref{iq:enoncecliqueTheCo3Plex}--\eqref{iq:enoncetrivialTheCo3plex}: notice that no clique of $G$ contains $u, \ v$ and an extremity of $wt$, implying that \eqref{iq:enoncecliqueTheCo3Plex} are satisfied by $(x^*,t^*,p^*)$ and all other inequalities are trivially satisfied. 
		Finally, we exhibit $|V| + |\mathcal{T}(G)| + |\mathcal{I}(G)|$ linearly independent tight constraints for $(x^*,t^*,p^*)$ :  
		
		\begin{itemize}    
			\item inequalities~\eqref{iq:enoncetrivialTheCo3plex} for variables associated with elements of $V \cup \mathcal{I}(G) \cup \mathcal{T}(G) \setminus \{wt \}$    
			\item inequalities~\eqref{iq:enoncecliqueTheCo3Plex} for any three maximal cliques of $G$ respectively containing $ \{ u,v \}$, $\{v,w\}$, and $\{u,t\}$    
			\item inequalities~\eqref{iq:enoncestarTheCo3Plex} for $w$ and $t$.   
		\end{itemize}   
	\end{proof}     
	
	As the orthogonal projection of an integer polytope yields another integer polytope, we get the following.
	\begin{corollary}
		The co-3-plex polytope of a chordal graph is equal to the projection of the linear system~\eqref{iq:enoncecliqueTheCo3Plex}--\eqref{iq:enoncetrivialTheCo3plex} onto the $x$ variable space.
	\end{corollary}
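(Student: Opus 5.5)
The plan is to combine the integrality part of the previous theorem with the bijection of Observation~\ref{obs:bijection}, and then to compare the projection with the co-3-plex polytope by double inclusion. Let $Q$ denote the polytope defined by \eqref{iq:enoncecliqueTheCo3Plex}--\eqref{iq:enoncetrivialTheCo3plex} for a chordal graph $G$. Let $\pi_x(Q)$ be its orthogonal projection onto the $x$-space, and let $P_3(G)$ be the convex hull of the incidence vectors of co-3-plexes of $G$.

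\emph{First inclusion, $P_3(G)\subseteq \pi_x(Q)$.} Take a co-3-plex $S$ of $G$ and encode it as described before the theorem. Set $x_v=1$ exactly for $v\in S$, and set $p_P=1$ (resp.\ $t_T=1$) exactly for the non-singleton components $P$ (resp.\ $T$) of $G[S]$.
\begin{itemize}
\item Let $z$ be the indicator of the singleton components. Then equality~\eqref{equation:rel} holds, and $(z,t,p)$ is the incidence vector of a stable set of $\mathcal{A}(G)$ by Observation~\ref{obs:bijection}.
\item This vector satisfies \eqref{iq:cliqueTheCo3plex}--\eqref{iq:trivialTheCo3plex}.
\item The substitution argument from the proof of the theorem then shows that $(x,t,p)\in Q$.
\end{itemize}
Hence every incidence vector of a co-3-plex lies in $\pi_x(Q)$. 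Since $\pi_x(Q)$ is convex, $P_3(G)\subseteq\pi_x(Q)$.

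\emph{Second inclusion, $\pi_x(Q)\subseteq P_3(G)$.}
\begin{itemize}
\item $Q$ is bounded. Inequalities \eqref{iq:enoncestarTheCo3Plex} and \eqref{iq:enoncetrivialTheCo3plex} give $x\ge 0$ and bound each $p,t$ by some $x_v$. Defining $z$ through \eqref{equation:rel} and reversing the substitution gives back \eqref{iq:cliqueTheCo3plex}, so each $x_v\le 1$.
\item By the previous theorem, $Q$ is an integer polytope, so $Q=\mathrm{conv}(Q\cap\mathbb{Z})$. Projection commutes with taking convex hulls, so $\pi_x(Q)$ is the convex hull of the $x$-parts of integer points of $Q$.
\item For an integer point $(x,t,p)\in Q$, define $z$ by \eqref{equation:rel}. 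Then $z\ge0$ by \eqref{iq:enoncestarTheCo3Plex}, and $z$ satisfies \eqref{iq:cliqueTheCo3plex}. So $(z,t,p)$ is a $0/1$ stable-set vector of $\mathcal{A}(G)$.
\item Observation~\ref{obs:bijection} turns this stable set into a co-3-plex $S=\bigcup L$ whose incidence vector is exactly $x$. This uses that each $v$ lies in exactly one chosen element, which \eqref{equation:rel} with $x_v\in\{0,1\}$ guarantees.
\end{itemize}
Therefore $\pi_x(Q)\subseteq P_3(G)$.

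\emph{Main obstacle.} The delicate step is reversing the substitution: checking that the clique inequalities \eqref{iq:enoncecliqueTheCo3Plex} together with \eqref{equation:rel} really give back \eqref{iq:cliqueTheCo3plex}. Concretely, an element meeting $K$ in $i$ vertices has coefficient $1-i$ in \eqref{iq:enoncecliqueTheCo3Plex}. The $z$-terms contribute $-i$, which combines with the $+1$ from $t(\mathcal{T}_K^1)+p(\mathcal{I}_K^1)$ to give the same $1-i$. I would verify this coefficient identity explicitly, since it is the only point where the reformulation and the original system could differ.
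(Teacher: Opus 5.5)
Your proposal is correct and follows the paper's route (integrality of the extended system from the preceding theorem, then projection). Your double inclusion makes explicit what the paper's one-line argument leaves implicit, namely that the $x$-parts of the integer points of the system are exactly the incidence vectors of co-3-plexes, via \eqref{equation:rel} and Observation~\ref{obs:bijection}. The coefficient identity you flag holds only when $\mathcal{T}_K^2$ and $\mathcal{T}_K^3$ are read as triangles meeting $K$ in exactly two, resp.\ three, vertices, as the paper's coefficient $-i+1$ intends: with the literal ``at least'' reading a triangle inside $K$ gets coefficient $-3$, and already for $G=K_3$ the points with $x_a=x_b=x_c=t_T=M$ and all $p$ variables zero are feasible for every $M\ge 0$, so the system would be unbounded.
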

	
	As the number of maximal cliques of chordal graphs is linear, Formulation~\eqref{iq:enoncecliqueTheCo3Plex}--\eqref{iq:enoncetrivialTheCo3plex} has a polynomial number of constraints. The number of triangles is also polynomial; however, the number of induced paths is exponential. Therefore, we propose a column generation algorithm in which the pricing subproblem is to find an induced path of $G$ whose variable has the maximum reduced cost.  Note that a variable $p_P$ is involved in a constraint~\eqref{iq:enoncecliqueTheCo3Plex} only if it intersects the clique $K$ twice, meaning that it contains an edge of $K$. Moreover, at most two vertices of an induced path can belong to a clique. Given input weights on the vertices $w\in \mathds{Q}^V$, dual values $\lambda_K \ \forall K \in \mathcal{K}(G)$ and $\mu_v \ \forall v \in V$ associated with the two non-trivial sets of constraints, the pricing subproblem is to find a maximum induced path $P$ whose weight is equal to :   
	
	\begin{equation}   \sum_{v \in P} w_v -\mu_v + \sum_{uv \in E(P)} \sum_{K \in \mathcal{K} : \{u,v\} \subseteq K} \lambda_K   \end{equation}   
	
	The pricing subproblem is then to find a maximum induced path with weights on vertices and edges in a chordal graph, which is doable in polynomial time as shown by the author of~\cite{dupontbouillard2025extendedformulationsinducedtree}. Therefore, we obtain the following, owing to the equivalence between optimization and separation~\cite{separationOptimization}:    \begin{corollary}   The weighted maximum co-3-plex problem is solvable in polynomial time on chordal graphs.  
	\end{corollary}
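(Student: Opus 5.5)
The plan is to solve the linear program \eqref{iq:enoncecliqueTheCo3Plex}--\eqref{iq:enoncetrivialTheCo3plex} with objective $\max w^\top x$, and argue that an optimal extreme point can be found in polynomial time.

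Since $G$ is chordal, the previous theorem says this polytope is integer. By the corollary on projections, the optimum over $x$ equals the maximum weight of a co-3-plex. Moreover, an optimal extreme point is integral and decodes into a co-3-plex through Observation~\ref{obs:bijection}.

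The variables are of three kinds:
\begin{itemize}
\item the $x$ variables, of which there are $|V|$;
\item the $t$ variables, one per triangle, of which there are $O(|V|^3)$;
\item the $p$ variables, one per induced path, which may be exponentially many.
\end{itemize}
The constraint count is polynomial. Constraints \eqref{iq:enoncestarTheCo3Plex} number $|V|$. Constraints \eqref{iq:enoncecliqueTheCo3Plex} number at most $|V|$, because a chordal graph has at most $|V|$ maximal cliques, and these can be listed in polynomial time from a perfect elimination ordering. So only the columns are the obstacle.

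I would therefore pass to the dual. It has polynomially many variables: the $x$-components, $\lambda_K$ for each $K\in\mathcal{K}(G)$, and $\mu_v$ for each $v\in V$. It has one constraint per column. By the equivalence between optimization and separation~\cite{separationOptimization}, the ellipsoid method solves the dual, and hence recovers a primal optimum, in polynomial time, provided two things hold. First, the separation problem for the dual must be polynomial. Second, the encoding sizes of the coefficients must be polynomially bounded; they are, since all coefficients lie in $\{-2,-1,0,1\}$.

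Separation splits into the following checks.
\begin{itemize}
\item Constraints for $x$ and $t$ columns: there are polynomially many, so I check each one directly.
\item Constraints for $p$ columns: I must find an induced path $P$ maximizing the reduced cost.
\end{itemize}
For a $p$ column, the reduced cost has a clean form. The coefficient of $p_P$ in \eqref{iq:enoncecliqueTheCo3Plex} is $-1$ exactly when $|P\cap K|\ge 2$. Since $P$ is induced, $|P\cap K|\le 2$, and $|P\cap K|=2$ happens exactly when the two vertices form an edge of $P$ lying in $K$. The coefficient of $p_P$ in \eqref{iq:enoncestarTheCo3Plex} is $1$ for each $v\in P$. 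So the reduced cost decomposes as a sum of vertex weights $-\mu_v$ plus edge weights $\sum_{K\supseteq\{u,v\}}\lambda_K$, as in the displayed pricing objective. This is a maximum vertex- and edge-weighted induced path problem on a chordal graph, which~\cite{dupontbouillard2025extendedformulationsinducedtree} solves in polynomial time.

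The main obstacle is getting the primal-dual bookkeeping right, in particular the signs of the reduced cost. Paths with a single edge are included, and those can be checked separately anyway. The other delicate point is extracting an integral primal optimum rather than just the optimal value. For that I would use the standard consequence of the Grötschel–Lovász–Schrijver framework: from the polynomially many dual constraints generated during the ellipsoid run, one obtains a polynomial-size restricted primal whose optimal vertex is a vertex of the full polytope. That vertex is integral by the theorem, and it yields the co-3-plex.
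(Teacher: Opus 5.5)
Your proposal is correct and follows the paper's own argument. It has the same three ingredients: polynomially many rows (at most $|V|$ maximal cliques and $|V|$ star constraints), pricing that reduces to a maximum vertex- and edge-weighted induced path in a chordal graph (since an induced path meets a clique in at most one edge), and the optimization--separation equivalence of~\cite{separationOptimization}. You go further than the paper by making the dual/ellipsoid bookkeeping explicit and by extracting an integral optimal vertex from the restricted primal, which is a face of the full polytope; the one slip is cosmetic, namely listing the $x$-components among the dual variables when the $x$ columns in fact contribute dual constraints.
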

	
	\medskip
	
	In the following, we give arguments tending to discard potential generalizations of this framework, reducing the maximum co-3-plex problem to some maximum stable set problem in an auxiliary perfect graph.  
	First, note that a consequence of Theorem~2.2 in~\cite{dupontbouillard2024contractions} is that $\mathcal{A}(G)$ is perfect if and only if $G$ is contraction perfect, which implies that contraction perfect graphs are the larger class of graphs on which this type of approach could be used.  
	
	\medskip
	
	The holes of contraction perfect graphs have size at most 4~\cite{dupontbouillard2024contractions}. 
	One could consider a generalization of the system~\eqref{iq:cliqueTheCo3plex}--\eqref{iq:trivialTheCo3plex} with variables associated with triangles and holes of size 4. 
	It requires solving the maximum vertex and edge weighted induced path problem in contraction perfect graphs, which is unknown to be polynomially solvable. When the edges are unweighted, the maximum weight induced path is solvable in polynomial time on $t$-chordal graphs~\cite{ISHIZEKI20083057}, which form a superclass of contraction perfect graphs for $k=4$. Unfortunately, it is unknown whether this algorithm can be extended to vertex and edge weighted graphs. 
	
	Another issue with such an approach is that the maximal cliques of $\mathcal{A}(G)$ for some contraction-perfect graph $G$ are not in one-to-one correspondence with the maximal cliques of $G$. 
	Their structure is related to holes of size 4; an example is given for some hole $(u,v,w,t)$ for which $\{u,v,wt\}$ induces a triangle whose completions into maximal cliques  of $\mathcal{A}(G)$ never contain $\{w,t\}$.
	However, besides being associated with holes of size 4, their number is no longer polynomial in contraction perfect graphs~\cite{dupontbouillard2025extendedformulationsmaximumweighted}, making it mandatory to develop a price and cut algorithm.     
	
	\medskip  
	
	Another generalization idea would be for general $k$. Such an extension corresponds to finding a maximum stable set in a graph whose vertices are subsets of $V(G)$, inducing connected subgraphs of maximum degree $k-1$, where two vertices are adjacent if they induce a connected component of $G$. The generalizations of Theorem~\ref{the:characchordal} and Lemma~\ref{lem:cliquechordal} would still hold by the adding twin/contracting edges construction. 
	Such sets of vertices could intersect the maximal cliques of $G$ more than twice, implying coefficients as low as $(-k+1)$ in the corresponding inequalities~\eqref{iq:enoncecliqueTheCo3Plex}.
	For general k, the reduced cost captures the number of times S hits a clique K beyond the first vertex.
	The pricing subproblem would then be to find a subset of vertices inducing a connected subgraph maximizing a function $f(S)$ depending on vertex weight $\lambda_v$ (dual values of inequalities~\eqref{iq:enoncestarTheCo3Plex}) and clique weight $\mu_K$ (dual values of the corresponding inequalities~\eqref{iq:trivialTheCo3plex}), and is as follows: $f(S) = w(S) - \lambda(S) + \sum_{K \in \mathcal{K}(G)} \mu_K \max(0, |S \cap K| -1) $.      
	
	\medskip     
	
	The description we provide in this work for the co-3-plex polytope of chordal graphs is implicit, and it could be interesting to project it onto smaller spaces to obtain, in particular explicit description of its facets. In~\cite{dupontbouillard2025extendedformulationsmaximumweighted}, a  compact extended formulation for the co-2-plex polytope of chordal graphs is given, which corresponds to Formulation~\eqref{iq:cliqueTheCo3plex}--\eqref{iq:trivialTheCo3plex}, where all triangles and induced paths of size larger than 3 have their variables set to 0. The author projected it onto the natural variable space for trees.  Obtaining the natural variable space of the co-3-plex polytope of chordal graphs is a harder task than providing it for the co-2-plex polytope by the correspondence between both extended formulations. The last question around this work would be to provide the extension complexity of the co-3-plex polytope of chordal graphs, there is still hope for a compact formulation to exist.       
	
	\section*{Conclusion}    We give a new application of the maximum vertex and edge weighted induced path problem to solve the maximum co-3-plex problem of a contraction perfect graph via a column generation algorithm. We apply this framework to the particular case of chordal graphs for which the pricing subproblem is polynomially solvable, as it reduces to finding a maximum weight induced path in a chordal graph, which yields a polynomial algorithm for the problem. Finally, we discuss the limit of such a generalization by raising the principal obstacles.

	\section*{Acknoledgment}    The author thanks the ANR project ANR-24-CE23-1621 EVARISTE for financial support.

	\bibliographystyle{plain}
	\bibliography{bibtex}
	
\end{document}